\def\BState{\State\hskip-\ALG@thistlm}
\newtheorem{theorem}{Theorem}
\newtheorem{lemma}{Lemma}
\newtheorem{cor}{Corollary}
\newtheorem{define}{Definition}
\DeclareMathOperator{\N}{\mathcal{N}}
\newcommand{\D}{\mathcal{D}}
\newcommand{\V}{\mathcal{V}}
\newcommand{\Ed}{\mathcal{E}_d}
\newcommand{\Eg}{\mathcal{E}_g}
\newcommand{\E}{\mathcal{E}}
\newcommand{\K}[1]{K_{#1}}
\DeclarePairedDelimiter{\ceil}{\lceil}{\rceil}
\DeclarePairedDelimiter{\floor}{\lfloor}{\rfloor}
\begin{document}

\title{\LARGE \bf $r$-Robustness and $(r,s)$-Robustness of Circulant Graphs}

\author{James Usevitch and Dimitra Panagou
\thanks{James Usevitch is with the Department of Aerospace Engineering, University of Michigan, Ann Arbor; \texttt{usevitch@umich.edu}.
Dimitra Panagou is with the Department of Aerospace Engineering, University of Michigan, Ann Arbor; \texttt{dpanagou@umich.edu}.
The authors would like to acknowledge the support of the Automotive Research Center (ARC) in accordance with Cooperative Agreement W56HZV-14-2-0001 U.S. Army TARDEC in Warren, MI.}
}


\maketitle
\thispagestyle{empty}
\pagestyle{empty}

\begin{abstract}
There has been recent growing interest in graph theoretical properties known as $r$- and $(r,s)$-robustness. These properties serve as sufficient conditions guaranteeing the success of certain consensus algorithms in networks with misbehaving agents present. Due to the complexity of determining the robustness for an arbitrary graph, several methods have previously been proposed for identifying the robustness of specific classes of graphs or constructing graphs with specified robustness levels. The majority of such approaches have focused on undirected graphs. In this paper we identify a class of scalable directed graphs whose edge set is determined by a parameter $k$ and prove that the robustness of these graphs is also determined by $k$. We support our results through computer simulations.
\end{abstract}

\IEEEpeerreviewmaketitle

\section{Introduction}
\label{intro}

In recent years there has been a growing amount of attention dedicated to the problem of consensus of a network in the presence of misbehaving agents. The interest in this problem stems from the seminal works \cite{Pease1980reaching} and \cite{Lamport1982byzantine}, in which the problem of reliable agents coming to an agreement in the presence of untrustworthy agents is discussed. 


In \cite{Kieckhafer1994reaching}, a family of algorithms called \emph{Mean-Subsequence-Reduced} (MSR) was introduced which allow normal agents to reach agreement in the presence of faulty or misbehaving agents. In \cite{Leblanc2011consensus} and \cite{Leblanc2012low}, elements of MSR algorithms were used to create a continuous time algorithm called the \emph{Adversarially Robust Consensus Protocol} (ARC-P), which allows normal agents to achieve consensus in the presence of misbehaving agents if the structure of the network satisfies certain conditions and if the total number of misbehaving agents is bounded. The papers \cite{Leblanc2012low} and \cite{Zhang2012robustness} demonstrated that notions traditionally used to describe networks in graph theory (e.g. connectivity, degree) are insufficient to describe the conditions under which algorithms using purely local information can guarantee successful consensus of normal agents.

In \cite{Zhang2012robustness} an algorithm for discrete time systems called \emph{Weighted Mean-Subsequence-Reduced} (W-MSR) was introduced that built upon the concept of MSR algorithms. To describe the conditions under which the W-MSR algorithm can guarantee consensus of normal agents, the authors introduced the concept of \emph{r-robustness}. It was proven that if at most $F$ nodes in the local neighborhood of any normal node are malicious, a sufficient condition for normal nodes to achieve consensus using the W-MSR algorithm is the network being $(2F+1)$-robust. This work was continued in \cite{LeBlanc2012} by introducing the concept of \emph{(r,s)-robustness} as a necessary and sufficient condition for consensus when the total number of malicious adversaries in the network was bounded by a finite constant. The work \cite{Vaidya2012} proved conditions for consensus in a network with Byzantine adversaries. These results are summarized in \cite{LeBlanc_2013_Res} and extended to an adversarial model where up to a certain fraction of each node's neighborhood might be malicious or Byzantine. A continuous-time algorithm with similar necessary and sufficient conditions for consensus based upon r-robustness was presented in \cite{LeBlanc_2013_Res_Continuous}. Additional publications based upon the concepts of $r$- and $(r,s)$-robustness have presented results involving double-integrator dynamics (\cite{Dibaji2014,Dibaji2015,Dibaji2017resilient}), quantized communication (\cite{Wu2016,Dibaji2016resilient,Dibaji2016resilientDelayed}), distributed optimization (\cite{Sundaram2015,Sundaram2016}), synchronization (\cite{LeBlanc2017}), and results dealing with conditions such as asynchronous updates and delays (\cite{LeBlanc2012a,Dibaji2016resilientDelayed,Dibaji2015}), and time-varying networks (\cite{Salda2017})


Vital to the success of the W-MSR algorithm and other algorithms operating on the assumption of $r$-robustness or $(r,s)$-robustness are the assumptions made about the characteristics of the communication topology of the network. In particular, such algorithms often guarantee consensus for a bounded number of adversaries only if the network can be shown to satisfy a certain level of $r$-robustness or $(r,s)$-robustness. This level of robustness is directly related to the upper bound of the number of misbehaving agents that the network can tolerate. From this, it is plain that knowing the robustness of a given graph is highly desirable when working with these algorithms. However, determining the robustness of an arbitrary network is an NP-hard problem \cite{Leblanc2013algorithms}. In \cite{Zhang2015a} it was further specified that determining whether an arbitrary graph satisfied a specified  level of $r$-robustness is a coNP-complete problem. No efficient algorithm currently exists for determining the robustness of an arbitrary graph.

This problem of robustness determination has been approached from several angles. One approach has been to study specific classes of graphs and demonstrate laws that determine their robustness levels. Some examples of graphs studied include Erd\"{o}s R\'{e}nyi, 1-D geometric, and Barab\'{a}si-Albert random graphs (\cite{Zhang2012robustness,Zhang2012a,Zhang2015a}); random intersection graphs (\cite{Zhao2017robust}); and random interdependent networks (\cite{Shahrivar2015a,Shahrivar2017}). A different approach taken by several authors has been creating methods to systematically construct graphs with guaranteed $r$-robustness or $(r,s)$-robustness (see \cite{LeBlanc_2013_Res, Zhang2012robustness, Guerrero2016formations, Saldana2016triangular}). The authors of \cite{Saldana2016triangular} introduced a method for constructing undirected graphs of arbitrary size that are $(2,2)$-robust. Another recent work introduced algorithms to construct undirected graphs of arbitrary robustness and to either increase or decrease the robustness of a particular graph whose current robustness is known \cite{Saldana2016triangular}. The method allows for the creation of robust undirected graphs with minimal number of nodes. We point out that a majority of the recent approaches to robustness determination have focused on undirected graphs.

Motivated by this robustness determination problem, we introduce a class of directed graphs in which each node's set of in-neighbors is determined by a connection parameter $k$. We show that graphs of this type have an $r$-robustness and $(r,s)$-robustness that is a function of $k$. Since these graphs have a determined robustness, they can be used for any consensus algorithm that depends on a specified robustness level (e.g. the W-MSR algorithm). Our method allows for the creation of graphs of arbitrary robustness and is scalable with respect to the number of nodes.

The paper is organized as follows: Section \ref{notation} outlines the notation used throughout this paper. Section \ref{robustness} presents our main theorem and results. Section \ref{simulations} presents simulations that support our results. Finally, Section \ref{conclusion} summarizes this paper and outlines potential future work.

\section{Notation, Circulant Graphs, and \emph{r}-Robustness}
\label{notation}

\subsection{Graph Theory Notation}


We denote a digraph as $\mathcal{D} = (\mathcal{V},\Ed)$, with $\V = \{1,...,n\}$ denoting the vertex set, or agent set, of the graph and $\Ed$ denoting the edge set of the graph. The set $\V$ is divided into agents that are behaving normally $\N$, and agents that are misbehaving or adversarial $\mathcal{A}$. A directed edge ${(i,j) \in \Ed}: i,j \in \V$ denotes that there exists a connection from node $i$ to node $j$, but not vice-versa. Agent $j$ is able to receive information from agent $i$ if $(i,j)$ is in $\Ed$. We call agent $i$ an in-neighbor of $j$ and agent $j$ an out-neighbor of $i$. We denote the in-neighbor set of a node $j$ as $K_j = \{i \in \mathcal{V}: (i,j) \in \Ed\}$.

We denote the cardinality of a set $S$ as $|S|$. We denote the set of integers as $\mathbb{Z}$ and the set of integers greater than or equal to 0 as $\mathbb{Z}_{\geq 0}$. In addition, we denote the set of natural numbers as $\mathbb{N}$.

An undirected graph of $n$ nodes is called circulant if there exists a set $\{a_1, a_2, \ldots, a_l \in \mathbb{Z}_{\geq 0}: a_1 < a_2 < \ldots < a_l < n\}$ such that $(i, \left[i \pm a_1 \right] \text{mod}\, n) \in \Eg, \ldots, (i,\left[i \pm a_l \right] \text{mod}\, n) \in \Eg$ \cite{Boesch1984}. We call such a graph an \emph{undirected circulant graph}. It should be noted that these graphs are constructed over the additive group of integers modulo $n$ (the nodes $n+a$ and $a$ are congruent modulo $n$). We now define a similar concept for directed graphs as follows:

\begin{define}
A digraph of $n$ nodes is called \emph{circulant} if there exists a set $\{a_1, a_2, \ldots, a_m: 0 < a_1 < a_2 < \ldots < a_m < n\},\ m \in \mathbb{Z}_{\geq 0}$ such that $(i, \left[i+a_1 \right] \text{mod}\, n) \in \mathcal{E}_d, \ldots, (i,\left[i+a_m \right] \text{mod}\, n) \in \mathcal{E}_d$. We denote such a graph as $C_n(a_1, a_2, \ldots, a_m) = (\V, \E_d)$ and call it a \emph{directed circulant graph} or \emph{circulant digraph}.
\end{define}

We point out that the name \emph{circulant} arises from the fact that the adjacency matrix for such a graph is a circulant matrix; i.e. a matrix where each row is defined by cyclically shifting every entry of the previous row one entry to the right. The matrix can therefore be defined by the entries of its first row (\cite{Boesch1984,Elspas1970}). A network of agents with this communication topology should not be confused with a network of agents in a physically circular formation. We emphasize that as long as the network satisfies the conditions outlined, it can be called circulant regardless of the physical orientation of the agents.

\begin{figure}[htp]
\centering
    \begin{minipage}{0.45\textwidth}
        \centering
        \begin{equation*}
\begin{bmatrix}
b_0 & b_1 & b_2 & \ddots & b_n \\
b_n & b_0 & b_1 & \ddots & b_{n-1} \\
b_{n-1} & b_n & b_0 & \ddots & b_{n-2} \\
\ddots & \ddots & \ddots & \ddots & \ddots \\
b_1 & b_2 & b_3 & \ddots & b_0
\end{bmatrix}
\end{equation*}
\caption{The general structure of a circulant matrix. By defining the first row, the rest of the matrix is determined. Circulant digraphs have circulant adjacency matrices.}
    \end{minipage}\hfill
    \begin{minipage}{0.45\textwidth}
        \centering
        \includegraphics[width=0.45\textwidth]{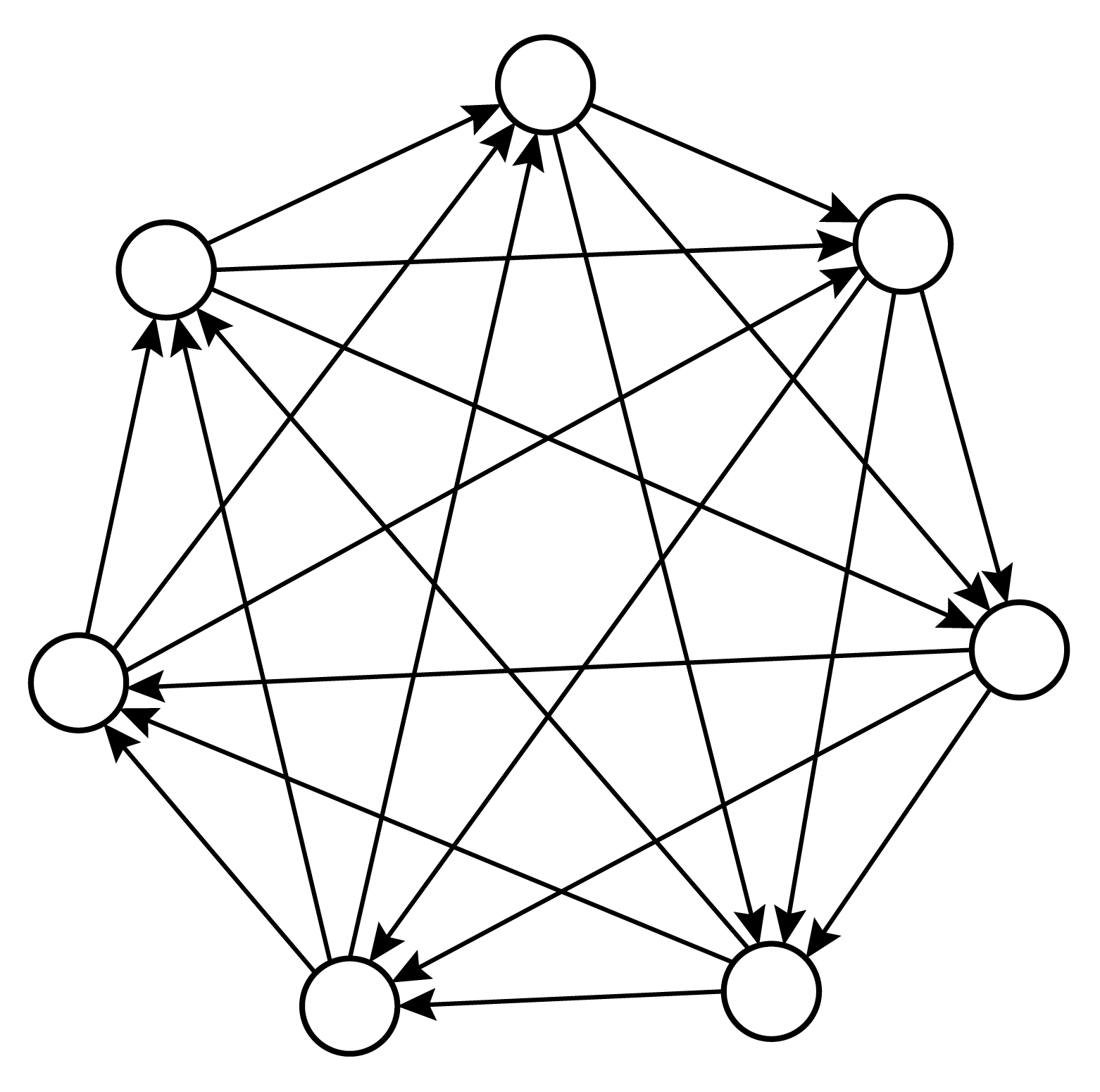}
        \caption{A 3-circulant digraph on 7 nodes, denoted $C_7\{1,2,3\}$.}
        \label{fig:3circ}
    \end{minipage}
\end{figure} 

In this paper, we analyze the robustness properties of a specific class of circulant digraphs which we call \emph{k-circulant digraphs}:

\begin{define}
Let $n \in \mathbb{Z},\ n \geq 2$ and let $k \in \mathbb{Z}: 1 \leq k \leq n-1$. A $k$-circulant digraph  is any circulant digraph of the form $C_n(1,2,3,\ldots,k) = (\V, \E_d)$.
\end{define}

This type of graph is fully determined by the number of nodes $n$ and by the parameter $k$, which determines the in- and out-neighbors of each node. In a graph without self-loops and without more than one edge between any two nodes, $1 \leq k \leq n-1$. When $k=n-1$, the graph becomes a complete graph.

\subsection{Reachability, $r$-Robustness, and $(r,s)$-Robustness}

The notions of reachability, \emph{r}-robustness, and $(r,s)$-robustness were defined by the authors of \cite{LeBlanc_2013_Res}. Although the definitions refer specifically to digraphs, they also apply to undirected graphs.\footnote{Undirected graphs can be modeled as digraphs in which $(i,j) \in \E \iff (j,i) \in \E$}

\begin{define}[\emph{Reachability}]
Consider a digraph $\mathcal{D} = (\V, \Ed)$ and a nonempty subset of nodes $\mathcal{S} \subset \mathcal{V}$. The set $\mathcal{S}$ is called \emph{r-reachable} if $\exists i \in \mathcal{S}$ such that $| K_i \setminus \mathcal{S}| \geq r,\ r \in \mathbb{Z}_{\geq 0}$.
\end{define}

\begin{define}[\emph{r-Robustness}]
A digraph $\mathcal{D}$ is called \emph{r-robust}, with $r \in \mathbb{Z}_{\geq 0}$, if for every nonempty, disjoint pair of subsets $S_1$ and $S_2$ of $\mathcal{V}$, at least one subset is \emph{r}-reachable. In other words, $\forall S_1 \subset \V$ and $\forall S_2 \subset \V$ such that $S_1 \neq \{\emptyset\},\ S_2 \neq \{\emptyset\}$, and $S_1 \cap S_2 = \{\emptyset\}$, $\exists i \in S_1: |K_i \backslash S_1| \geq r$ or $\exists j \in S_2: |K_j \backslash S_2| \geq r$.
\end{define}

\begin{define}[\emph{(r,s)-Robustness}]
Consider a nonempty and nontrivial digraph on $n \geq 2$ nodes, $\mathcal{D} = (\mathcal{V},\mathcal{E}_d)$. Define $r \in \mathbb{Z}_{\geq 0}$ and $s \in \mathbb{Z}: 1 \leq s \leq n$. Also, define the set $\mathcal{X}_{S_m}^r = \{i \in S_m: | K_i \setminus S_m| \geq r\}$ for $m \in \{1,2\}$, where $S_1$ and $S_2$ are nonempty, disjoint subsets of $\V$. Then the digraph $\mathcal{D}$ is called $(r,s)$\emph{-robust} if for every pair of subsets $S_1,\ S_2$ of $\mathcal{V}$, one of the following conditions holds:

\begin{enumerate}
	\item $|\mathcal{X}_{S_1}^r| = |S_1|$
	\item $|\mathcal{X}_{S_2}^r| = |S_2|$
	\item $|\mathcal{X}_{S_1}^r| + |\mathcal{X}_{S_2}^r| \geq s$
\end{enumerate}
\end{define}

\subsection{The W-MSR Algorithm}

The W-MSR algorithm is based upon a linear consensus protocol. The linear consensus protocol operates by updating each normal agent's state $x[t] \in \mathbb{R}$ according to the equation

\[x_i[t+1] = \sum_{j \in K_i \cup \{i\}} w_{ij}[t]x_j[t],\ \forall i \in \N  \]

It is assumed that the following conditions hold for the weights $w_{ij}[t]$ for all $i \in \N$ and for all $t \in \mathbb{Z}_{\geq 0}$:

\begin{itemize}
\item $w_{ij}[t] = 0$ when $j \notin K_i \cup \{i\}$
\item $w_{ij}[t] \geq \alpha,\ 0 < \alpha < 1,\ \forall j \in K_i \cup \{i\}$
\item $\sum_{j=1}^n w_{ij}[t] = 1$
\end{itemize}

The W-MSR algorithm with parameter $F \geq 0$ alters the above protocol by having each agent remove state values that are relatively extreme compared to the rest of the agent's in-neighbor set and its own state. Specifically, it works as follows \cite{LeBlanc_2013_Res}:

\begin{enumerate}
\item At each time step $t$, every normal agent $i \in \N$ forms a sorted list of the state values of its in-neighbors and its own state
\item If there are $F$ or less values greater than its state value, each agent $i$ removes those values from the list. If there are greater than $F$ states greater than its own state, it removes the highest $F$ states. In addition, if there are $F$ or less values less than its state, it removes those values. If there are more than $F$ values less than its state value, it removes the smallest $F$ values from the list. 
\item  The set of in-neighbors for each $i$ whose state values were not removed from the list at $t$ is denoted $\mathcal{P}_i[t]$. Using the state values remaining in the list, each agent $i$ updates its state as follows:
\begin{equation}
\label{inputeq}
x_i[t+1] = \sum_{j \in \mathcal{P}_i \cup \{i\}} w_{ij}[t]x_j[t]
\end{equation}
\end{enumerate}

The main advantage of the W-MSR algorithm and related algorithms is that they allow normally behaving nodes in a network with only local information to achieve consensus in the presence of misbehaving nodes. The algorithm allows normal nodes to achieve consensus without any global knowledge of the network structure and without the need for any node to identify the misbehaving agents in its in-neighbor set. The measures of $r$- and $(r,s)$-robustness describe the amount of misbehaving agents a network can tolerate.\footnote{Misbehaving agents refer to agents that do not update their states according to the nominal state update protocol. A malicious agent is defined as an agent $i$ who does not apply the W-MSR protocol to update its state $x_i[t]$, but at each time step sends the same value $x_i[t]$ to all its out-neighbors. A Byzantine agent is defined as an agent which at each time step either sends different values to different out-neighbors or does not apply the W-MSR protocol to update its state (see \cite{LeBlanc_2013_Res,LeBlanc_2013_Res_Continuous,Zhang2012c})}. A network being $(2F+1)$-robust is a sufficient condition for the normal nodes using the W-MSR algorithm to achieve consensus to a value within the convex set of the maximum and minimum initial states if each node has no more than $F$ malicious in-neighbors. It is also a sufficient condition for the normal nodes to achieve this same kind of consensus if there are $F$ total Byzantine nodes (and no other misbehaving nodes) in the network. A network being $(F+1,F+1)$-robust is a necessary and sufficient condition for the normal nodes to achieve consensus when no more than $F$ total malicious nodes are present in the entire network \cite{LeBlanc_2013_Res}.  

\section{Robustness of Circulant Graphs}
\label{robustness}

\subsection{Undirected Circulant Graphs}

Theorem 4 of \cite{Zhang2015a} demonstrates that if an undirected line or ring graph is $2p$-connected, then it is at least $\floor{\frac{p}{2}}$-robust. This result applies to undirected circulant graphs, which fall under the category of undirected ring graphs. In \cite{LeBlanc_2012_thesis} it is shown that $(r+s-1)$-robustness implies $(r,s)$-robustness. It can then be shown that a $2p$ connected graph is at least $(\floor{\frac{p+2}{2}},\floor{\frac{p+2}{2}})$-robust for even $p$ and at least $(\floor{\frac{p+1}{2}},\floor{\frac{p+1}{2}})$-robust for odd $p$.

However, this theorem does not apply to directed graphs because some ambiguity arises with the definition of vertex connectivity for digraphs. The vertex connectivity of an undirected graph is traditionally defined as the minimum number of vertices whose removal results in either a disconnected graph or a trivial single vertex graph \cite{Rahman2017}.\footnote{An undirected graph is disconnected if there exists two nodes with no path between them.} The authors of \cite{Leblanc2012low} define a digraph to be disconnected if its underlying graph is disconnected, where the underlying graph is the graph created by replacing all directed edges of the graph with undirected edges. This definition of connectivity for a digraph therefore hinges upon the connectivity of its underlying graph. However, using this definition of connectivity it can be shown that there exist digraphs which are not $\floor{\frac{p}{2}}$-robust, but whose underlying graphs are $p$-connected. For example, under the definition just described the graph in Figure \ref{fig:CountEx} would be 4 connected, but is only 1-robust. Another measure of connectivity generalized to digraphs exists, called a minimum vertex disconnecting set (\cite{Tindell1996}; see also the equivalent definition of cutset in \cite{Hamidoune1984}). It can be shown however that graphs with arbitrarily large minimum vertex disconnecting sets are trivially 1-robust, and therefore this metric cannot be used to determine robustness.  A different proof is therefore necessary to demonstrate the robustness of circulant digraphs.


\begin{figure}
\includegraphics[width=\columnwidth]{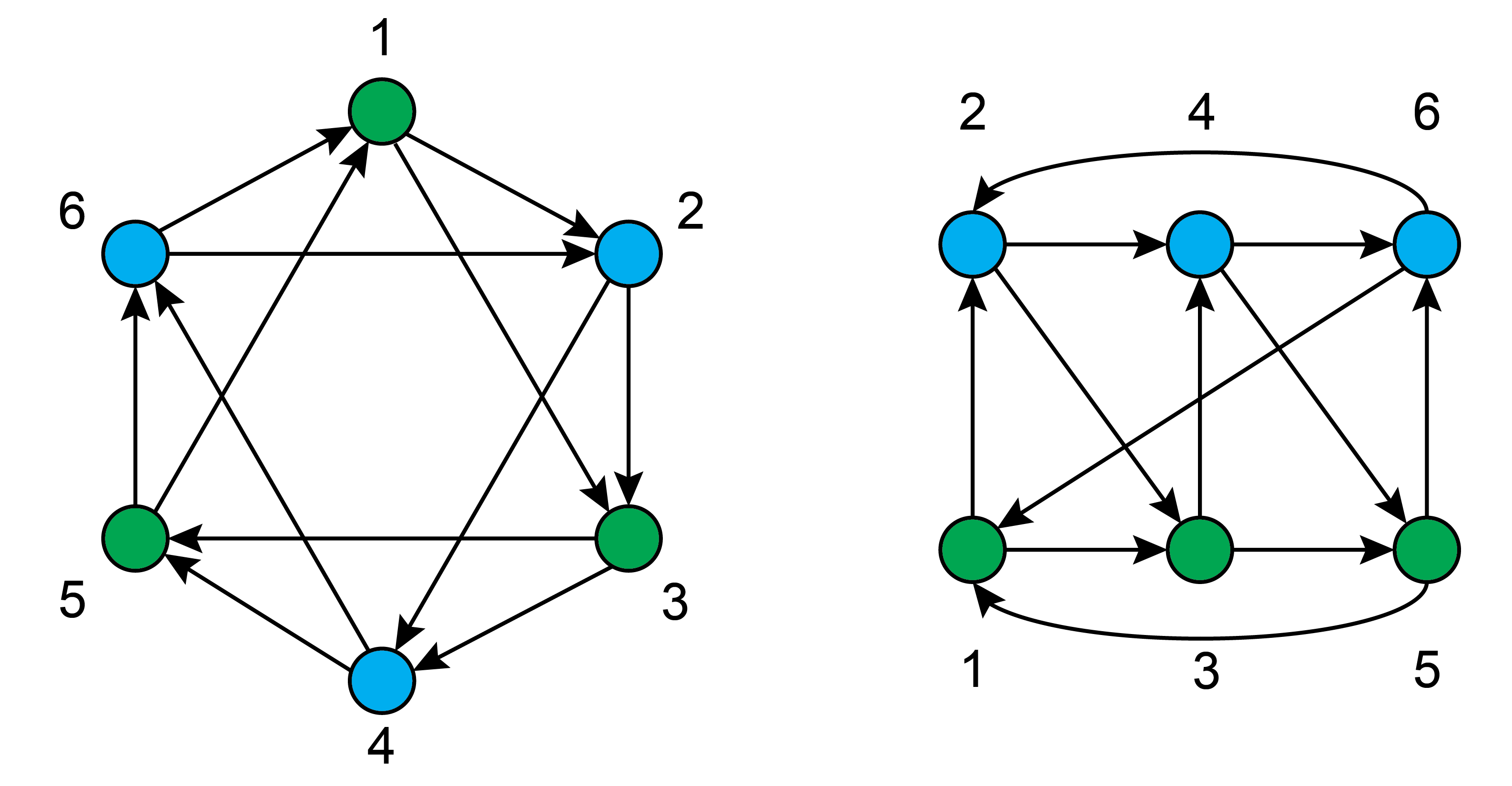}
\caption{Example of a directed graph whose underlying graph is $p$-connected, but which is not $\floor{\frac{p}{2}}$-robust. The graph shown has an underlying graph with vertex connectivity equal to 4. If the nodes of the graph are divided into the two nonempty, disjoint sets denoted by the green and blue colors, each node clearly only has one in-neighbor outside its set. This implies that the graph can be no more than 1-robust. Note that the two arrangements are the exact same graph; the second configuration is rearranged for clarity.}
\label{fig:CountEx}
\end{figure}

\subsection{$r$-Robustness of Circulant Digraphs}

In this section we present our main results, which demonstrate the robustness of k-circulant digraphs. We first establish their $r$-robustness:

\begin{theorem}
\label{circdigraph}
The circulant digraph $C_n(1, \ldots,k) = (\V, \Ed)$ is $\ceil{\frac{k}{2}}$-robust, where $k < n$. Moreover, if $k = n-1$ the graph is $\ceil{\frac{n}{2}}$-robust.
\end{theorem}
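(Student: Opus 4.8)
The plan is to argue by contradiction and reduce the whole statement to a size bound on ``non-reachable'' sets. Recall that each vertex $i$ of $C_n(1,\ldots,k)$ has in-neighbour set $K_i = \{i-1,i-2,\ldots,i-k\} \bmod n$, a block of $k$ consecutive predecessors, so $|K_i| = k$ when $k < n$. Suppose the graph were \emph{not} $\ceil{\frac{k}{2}}$-robust. Then there exist nonempty disjoint $S_1, S_2 \subset \V$, neither of which is $\ceil{\frac{k}{2}}$-reachable; that is, every $i \in S_m$ satisfies $|K_i \setminus S_m| \le \ceil{\frac{k}{2}} - 1$, equivalently $|K_i \cap S_m| \ge k - \ceil{\frac{k}{2}} + 1 = \floor{\frac{k}{2}} + 1$. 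The goal is to show that any such set must contain strictly more than $n/2$ vertices; since $S_1$ and $S_2$ are disjoint we have $|S_1| + |S_2| \le n$, and two sets each larger than $n/2$ cannot coexist, giving the contradiction.

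First I would record the local consequence of non-reachability: writing $G = \ceil{\frac{k}{2}} - 1$ and $\overline{S} = \V \setminus S$, the condition says that for every $i \in S$ the window of $k$ consecutive predecessors $\{i-1,\ldots,i-k\}$ contains at most $G$ vertices of $\overline{S}$. The key technical step is to upgrade this from ``windows ending just before an $S$-vertex'' to \emph{every} window: I claim that for any $p$ the block $\{p+1,\ldots,p+k\}$ contains at most $G$ vertices of $\overline{S}$. To see this, slide the block forward one vertex at a time until the vertex immediately following it lies in $S$ (such a vertex exists because $S \neq \emptyset$); at that moment the block is exactly the predecessor window $K_i$ of some $i \in S$, hence has at most $G$ vertices of $\overline{S}$. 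Every vertex added at the front during the slide lies strictly between the original block and that next $S$-vertex, so it is in $\overline{S}$; therefore each one-step slide never decreases the count of $\overline{S}$-vertices, and the original block also has at most $G$ of them.

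With the window bound in hand I would finish by double counting the edges going from $\overline{S}$ into $S$. On one hand, summing the non-reachability inequality over $i \in S$ gives at most $G\,|S|$ such edges. On the other hand, applying the window bound to the out-neighbour block $\{v+1,\ldots,v+k\}$ of each $v \in \overline{S}$ shows that $v$ sends at least $k - G = \floor{\frac{k}{2}} + 1$ edges into $S$, for a total of at least $(\floor{\frac{k}{2}}+1)\,|\overline{S}|$. Comparing the two counts yields $(\floor{\frac{k}{2}}+1)\,|\overline{S}| \le G\,|S|$, and since $G = \ceil{\frac{k}{2}} - 1 < \floor{\frac{k}{2}} + 1$ this forces $|\overline{S}| < |S|$, i.e.\ $|S| > n/2$ (the case $\overline{S} = \emptyset$ being immediate). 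Applying this to both $S_1$ and $S_2$ completes the contradiction and establishes $\ceil{\frac{k}{2}}$-robustness.

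The main obstacle is precisely the window bound: a naive attempt to show that every maximal run of $S$ has length at least $\floor{\frac{k}{2}}+1$ fails, since short runs can in fact occur, so the averaging must be carried out globally rather than run-by-run, which is exactly what the sliding/monotonicity argument accomplishes. Finally, for the ``moreover'' claim, $k = n-1$ makes $C_n(1,\ldots,n-1)$ the complete digraph, where $K_i = \V \setminus \{i\}$; then for disjoint $S_1, S_2$ one computes $|K_i \setminus S_m| = n - |S_m|$, so the worst case maximizes $\min(|S_1|,|S_2|) = \floor{\frac{n}{2}}$ and gives reachability $n - \floor{\frac{n}{2}} = \ceil{\frac{n}{2}}$, a short direct calculation handled separately from the general bound.
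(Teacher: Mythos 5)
Your proof is correct, but it takes a genuinely different route from the paper's. The paper first reduces, with a citation, to the ``worst case'' where $S_1 \cup S_2 = \V$, then argues locally at a boundary crossing: it picks $i \in S_1$ with the nearest following $S_2$-vertex $i+b$ (forcing $b < \sigma$), and derives the contradiction $\sigma \geq \frac{k}{2}+1$ by counting membership within the two overlapping in-neighbor windows $K_i$ and $K_{i+b}$. You instead prove a standalone quantitative lemma --- any set $S$ that is not $\lceil \frac{k}{2} \rceil$-reachable must satisfy $|S| > \frac{n}{2}$ --- via two global ingredients: a sliding-window monotonicity argument showing \emph{every} block of $k$ consecutive vertices contains at most $\lceil \frac{k}{2}\rceil - 1$ vertices of $\overline{S}$ (the slide is valid: each vertex added at the front before the stopping time lies in $\overline{S}$, so the count is nondecreasing and is bounded at the terminal position by non-reachability of the stopping vertex), and a double count of the edges from $\overline{S}$ into $S$, giving $(\lfloor\frac{k}{2}\rfloor+1)|\overline{S}| \leq (\lceil\frac{k}{2}\rceil-1)|S|$ and hence $|\overline{S}| < |S|$. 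What your approach buys is notable: it treats $S_1$ and $S_2$ entirely separately and never needs the partition assumption $S_1 \cup S_2 = \V$, which the paper asserts by reference but which is delicate in general (enlarging a non-reachable pair to a partition can introduce new vertices that are reachable, so checking only partitions is not obviously without loss of generality); your argument is therefore self-contained where the paper's leans on an external reduction, and it yields extra information in the form of an explicit cardinality bound on non-reachable sets. What the paper's approach buys is brevity: once the partition reduction is granted, the boundary computation is a few lines of arithmetic, whereas your window lemma requires the careful termination and monotonicity bookkeeping. Your handling of the $k = n-1$ case by direct computation ($K_i \setminus S_m = \V \setminus S_m$, so reachability equals $n - \min(|S_1|,|S_2|) \geq \lceil \frac{n}{2}\rceil$) is also a clean replacement for the paper's citation of prior work.
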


\begin{proof}
The scenario that limits robustness is when the sets $S_1$ and $S_2$ allow for the least number of in-neighbors outside of any agent's set \cite{Guerrero2016formations}. This is the case when $S_1 \cup S_2 = \V$, with $S_1 \cap S_2 = \{\emptyset\}$, and therefore we proceed with this assumption.

From the definition of $r$-robustness it follows that if a graph is \emph{not} $\sigma$-robust for some value $\sigma \in \mathbb{Z}$, then $\exists S_1,S_2$ such that $\forall i \in S_1,\ |K_i \backslash S_1| < \sigma$ and $\forall j \in S_2,\ |K_j \backslash S_2| < \sigma$. To prevent confusion, we clarify that the phrase $\sigma$-robust simply means the graph is $r$-robust for the value $r = \sigma$. No graph can be less than $0$-robust, hence $\sigma \geq 0$.

Suppose that a k-circulant graph is not $\sigma$-robust. Without loss of generality, this implies that there are $S_1$ and $S_2$ such that for any node $i \in S_1$ there exists a $b \in \mathbb{Z},\ 0 < b < \sigma$ such that node $i+b \in S_2$ and nodes $\{i+1, \ldots, i+b-1\} \in S_1$. This can be seen by noting that $S_2$ is nonempty, and if $b \geq \sigma$ then $|K_{i+b}\backslash S_2| \geq \sigma$, contradicting our initial assumption.

Next, note that the in-neighbor set of $i$ is $K_i = \{i-k, \ldots, i-1\}$ and the in-neighbor set of $i+b$ is $\K{i+b} = \{i+b-k, \ldots, i+b-1\}$. The intersection of these two in-neighbors sets is $K_i \cap K_{i+b} = \{i+b-k, \ldots, i-1\}$ Denote the number of $S_1$ nodes and the number of $S_2$ nodes in the set $\{i-k, \ldots, i+b-k-1\}$ as $\alpha_1$ and $\beta_1$ respectively, with $\alpha_1,\beta_1 \in \mathbb{Z}_{\geq 0}$. Denote the number of $S_1$ nodes and $S_2$ nodes in the set $\{i+b-k, \ldots, i-1\}$ as $\alpha_2$ and $\beta_2$ respectively, with $\alpha_2, \beta_2 \in \mathbb{Z}_{\geq 0}$.

Observe that $|K_i \backslash S_1| = \beta_1 + \beta_2$. Also note that ${|K_{i+b} \backslash S_2|} = \alpha_2 + b$ (since nodes $\{i, \ldots, i+b-1\} \in S_1$). By our robustness assumption, we must have that $\beta_1+\beta_2 \leq \sigma-1$ and $\alpha + b \leq \sigma-1$. The definition of a k-circulant graph also implies that $|K_{i+b}| =  \alpha_2 + \beta_2 + b = k$. From these equations we obtain
\begin{align*}
\alpha_2 + b + \beta_2 &\leq \sigma -1 + \beta_2\\
k &\leq \sigma - 1 + \beta_2\\
k-\sigma+1 &\leq \beta_2
\end{align*}
This then implies that $\beta_1 + k - \sigma + 1 \leq \beta_1 + \beta_2 \leq \sigma-1$. Rearranging we obtain $\beta_1 \leq 2 \sigma - (k+2)$.  Since $0 \leq \beta_1$ we then have
\begin{align*}
0 &\leq 2 \sigma - (k+2) \\
\frac{k}{2} + 1 &\leq \sigma
\end{align*}
Since $\sigma \in \mathbb{Z}$, this implies that the smallest value of $\sigma$ for which a k-circulant graph is \emph{not} $\sigma$-robust is $\frac{k}{2} + 1$ for even $k$ and $\ceil{\frac{k}{2}+1}$ for odd $k$. Therefore a $k$-circulant graph must be $\ceil{\frac{k}{2}}$-robust.

Lastly, the case when $k = n-1$ implies a complete graph. From \cite{LeBlanc_2013_Res} it can be shown that such graphs are $\ceil{\frac{n}{2}}$-robust.
\end{proof}

Since the robustness is a function of $k$ only and not of $n$, it is worth noting that the $r$-robustness of circulant digraphs can be determined regardless of the size of the network. As a result these graphs can easily be scaled to any number of nodes while maintaining a desired robustness level. The main limitation is that $k \leq n-1$, implying that a graph with a desired robustness will require a minimum number of nodes.

\begin{figure}
\includegraphics[width=\columnwidth]{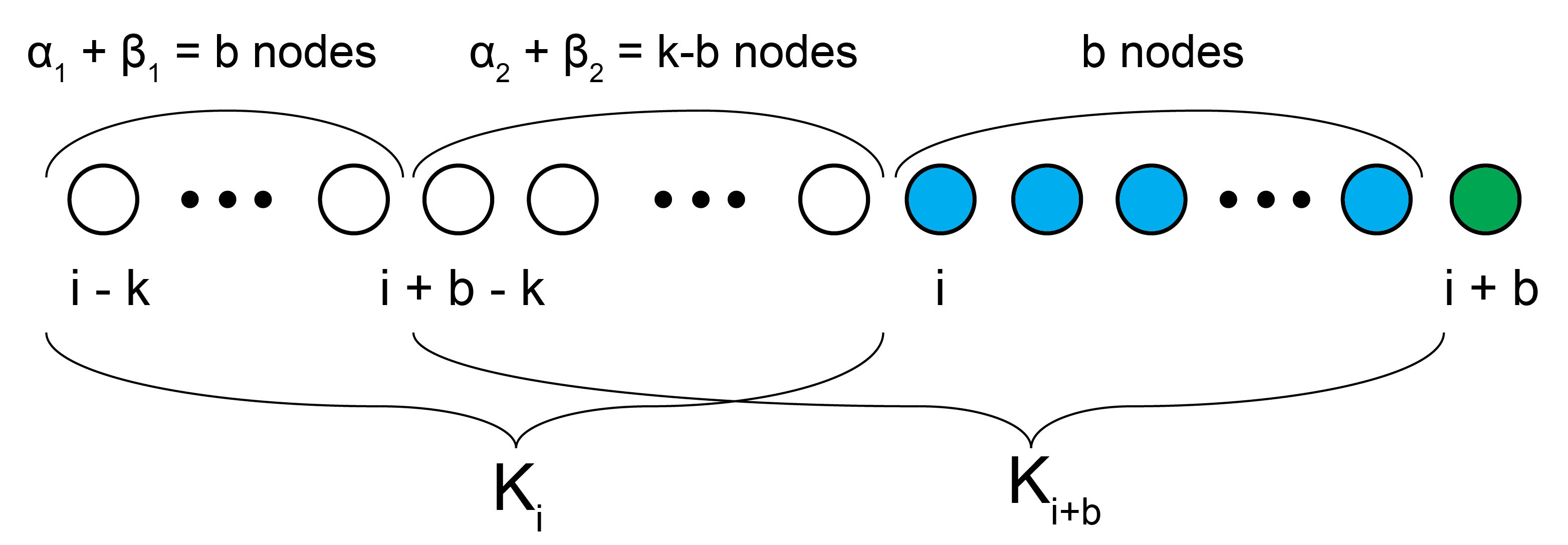}
\caption{Visualization of the sets $K_i$ and $K_{i+b}$, and the values $\alpha_1,\ \alpha_2,\ \beta_1,\ \beta_2$. Here, $i \in S_1$ with $S_1$ represented by the color blue. From the proof, there exists a node $i+b \in S_2$, with $S_2$ represented by the color green. Nodes $i-k$ through $i-1$ are either in $S_1$ or $S_2$.}
\end{figure}

\subsection{$(r,s)$-Robustness of Circulant Digraphs}

In \cite{LeBlanc_2012_thesis} a connection between $r$-robustness and $(r,s)$-robustness was given:

\begin{lemma}
\label{r2rs}
If $\mathcal{D} = (\V, \Ed)$ is $(r+s-1)$-robust with $r \in \mathbb{Z}_{\geq 0}$, $s \in \mathbb{N}$, and $1 \leq r+s-1 \leq \ceil{\frac{n}{2}}$, then $\mathcal{D}$ is $(r,s)$-robust.
\end{lemma}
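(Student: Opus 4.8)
The plan is to argue by contraposition: I will assume $\D$ is \emph{not} $(r,s)$-robust and produce a pair of subsets witnessing that $\D$ is not $(r+s-1)$-robust. By the definition of $(r,s)$-robustness, failure means there exist nonempty, disjoint $S_1, S_2 \subset \V$ for which all three conditions fail at once. Negating them gives the three facts I would carry through the argument: (i) $|\mathcal{X}_{S_1}^r| < |S_1|$, (ii) $|\mathcal{X}_{S_2}^r| < |S_2|$, and (iii) $|\mathcal{X}_{S_1}^r| + |\mathcal{X}_{S_2}^r| \leq s-1$. Facts (i) and (ii) say each of $S_1$ and $S_2$ contains at least one node with fewer than $r$ in-neighbors outside its own set, while (iii) caps the total number of nodes that \emph{do} have $r$ external in-neighbors.

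The construction I would use is to delete exactly those offending high-reachability nodes: set $S_1' = S_1 \setminus \mathcal{X}_{S_1}^r$ and $S_2' = S_2 \setminus \mathcal{X}_{S_2}^r$. By (i) and (ii) both $S_1'$ and $S_2'$ are nonempty, and they remain disjoint as subsets of the disjoint sets $S_1,S_2$. Note the definition of robustness requires only disjoint subsets, not a partition of $\V$, so discarding the $\mathcal{X}$ nodes entirely is permissible.

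The core step I would carry out is a counting argument on in-neighbors. For any $v \in S_1'$, since $v \notin \mathcal{X}_{S_1}^r$ we have $|K_v \setminus S_1| \leq r-1$. Passing from $S_1$ to $S_1'$ only removes the nodes of $\mathcal{X}_{S_1}^r$, so the count of $v$'s external in-neighbors can grow by at most $|K_v \cap \mathcal{X}_{S_1}^r| \leq |\mathcal{X}_{S_1}^r|$, giving
\[
|K_v \setminus S_1'| \;\leq\; (r-1) + |\mathcal{X}_{S_1}^r| \;\leq\; (r-1) + (s-1) \;=\; r+s-2 \;<\; r+s-1,
\]
where the middle inequality is precisely fact (iii) bounding $|\mathcal{X}_{S_1}^r|$ by $s-1$. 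The identical estimate holds for every $v \in S_2'$. Hence neither $S_1'$ nor $S_2'$ is $(r+s-1)$-reachable, contradicting $(r+s-1)$-robustness and finishing the proof.

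The main obstacle is the bookkeeping in this counting step: correctly tracking how each surviving node's external in-neighbor count increases when the $\mathcal{X}$ nodes are deleted, and recognizing that condition (iii) is exactly what keeps that increase strictly below $s$. I expect the size hypotheses $1 \leq r+s-1$ and $r+s-1 \leq \ceil{\frac{n}{2}}$ not to enter the counting at all; rather, the lower bound guarantees that being "$(r+s-1)$-reachable" is a nontrivial condition so that the constructed contradiction is genuine, while the upper bound keeps the hypothesis non-vacuous, since $\ceil{\frac{n}{2}}$ is the largest robustness any digraph on $n$ nodes can attain.
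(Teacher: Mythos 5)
Your proof is correct and is essentially the same argument the paper relies on: the paper gives no in-text proof (it defers to \cite{LeBlanc_2012_thesis}), and the standard proof there is precisely your contrapositive construction --- delete the reachable nodes to form $S_1' = S_1 \setminus \mathcal{X}_{S_1}^r$ and $S_2' = S_2 \setminus \mathcal{X}_{S_2}^r$, then count $|K_v \setminus S_m'| \leq (r-1) + (s-1) < r+s-1$ using the negated condition (iii). Your reading of the side hypotheses is also accurate: the bounds $1 \leq r+s-1 \leq \ceil{\frac{n}{2}}$ never enter the counting and serve only to keep the hypothesis non-vacuous, since $\ceil{\frac{n}{2}}$ is the maximal robustness attainable on $n$ nodes.
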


The proof is outlined in \cite{LeBlanc_2012_thesis}. It should be noted that this is a sufficient condition only, and so the graph may actually have a higher $(r,s)$-robustness (e.g. consider a complete graph). We use this lemma to demonstrate the relationship between a lower bound of $(r,s)$-robustness of $C_n(1, \ldots,k)$-circulant digraphs and the parameter $k$:

\begin{cor}
\label{cor:rsrob}
The circulant digraph $C_n(1, \ldots,k) = (\V, \Ed)$ is at least $(\floor{\frac{k+2}{4}},\floor{\frac{k+2}{4}})$-robust if $k$ is even and at least $(\floor{\frac{k+3}{4}},\floor{\frac{k+3}{4}})$-robust if $k$ is odd.
\end{cor}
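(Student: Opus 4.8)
The plan is to apply Lemma \ref{r2rs} directly on top of the $r$-robustness established in Theorem \ref{circdigraph}, exploiting the fact that we only need the balanced case $r = s$. The first ingredient I would record is the standard monotonicity of robustness: if a digraph is $\rho$-robust then it is $\rho'$-robust for every integer $0 \le \rho' \le \rho$, since a node with at least $\rho$ in-neighbors outside its set certainly has at least $\rho'$ of them. Combined with Theorem \ref{circdigraph}, this tells me that $C_n(1,\ldots,k)$ is $(2t-1)$-robust for every integer $t$ satisfying $2t-1 \le \ceil{\frac{k}{2}}$.

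The second step is the choice of $t$. Setting $r = s = t$ in Lemma \ref{r2rs}, the hypothesis becomes $(r+s-1)$-robustness $=$ $(2t-1)$-robustness, which I have just secured whenever $2t-1 \le \ceil{\frac{k}{2}}$. I would therefore take the largest admissible $t$, namely the largest integer with $t \le \frac{1}{2}\left(\ceil{\frac{k}{2}}+1\right)$, and then split on the parity of $k$. For even $k$ one has $\ceil{\frac{k}{2}} = \frac{k}{2}$, giving $t = \floor{\frac{k+2}{4}}$; for odd $k$ one has $\ceil{\frac{k}{2}} = \frac{k+1}{2}$, giving $t = \floor{\frac{k+3}{4}}$. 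These are precisely the two values appearing in the statement.

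The third step is to check that the side conditions of Lemma \ref{r2rs} are met, so that the conclusion $(t,t)$-robustness actually applies. I would verify $1 \le 2t-1 \le \ceil{\frac{n}{2}}$: the upper bound follows from $2t-1 \le \ceil{\frac{k}{2}} \le \ceil{\frac{n}{2}}$ because $k \le n-1$, while the lower bound $2t-1 \ge 1$ amounts to $t \ge 1$, which holds for the relevant range of $k$. Invoking Lemma \ref{r2rs} with $r = s = t$ then yields the claimed $(t,t)$-robustness.

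Since this is essentially a chaining of the two preceding results, there is no genuine obstacle; the only delicate point is the floor-function bookkeeping, ensuring the solved inequality $t \le \frac{1}{2}(\ceil{k/2}+1)$ collapses to $\floor{\frac{k+2}{4}}$ and $\floor{\frac{k+3}{4}}$ in the even and odd cases respectively, and that the boundary requirement $2t-1 \ge 1$ is respected for the smallest values of $k$. Everything else is a direct invocation of Theorem \ref{circdigraph} and Lemma \ref{r2rs}.
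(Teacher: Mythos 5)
Your proposal is correct and follows essentially the same route as the paper: chain Theorem \ref{circdigraph} into Lemma \ref{r2rs} with $r=s$ and solve $2t-1 \le \ceil{\frac{k}{2}}$ for the parity-dependent floors. In fact your version is slightly more careful than the paper's, which writes $r=s=\frac{k+2}{4}$ (generally non-integer) and appeals implicitly to the monotonicity of robustness and the side condition $1 \le r+s-1 \le \ceil{\frac{n}{2}}$ that you verify explicitly.
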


\begin{proof}
If $k$ is even, then $C_n(1, \ldots,k)$ is at least $\frac{k}{2}$-robust by Theorem \ref{circdigraph}. Since we are interested in establishing an upper bound $F$ on the number of adversaries in the network, we seek to find the maximum value of $F$ for which the network is $(F+1, F+1)$-robust. This implies $r = s$ for the network's $(r,s)$-robustness. Hence by Lemma \ref{r2rs}, and letting $r = s$,
\begin{align*}
r+s-1 &= \frac{k}{2}  \\
r+s &= \frac{k+2}{2} \\
r = s &= \frac{k+2}{4} \\
&\geq \floor*{\frac{k+2}{4}}
\end{align*}
If $k$ is odd, then $C_n(1, \ldots,k)$ is at least $\ceil*{\frac{k}{2}}$-robust by Theorem \ref{circdigraph}. Hence
\begin{align*}
r+s-1 &= \ceil*{\frac{k}{2}} = \frac{k+1}{2} \\
r+s &= \frac{k+3}{2} \\
r = s &= \frac{k+3}{4} \\
&\geq \floor*{\frac{k+3}{4}}
\end{align*}
\end{proof}



\section{Simulations and Discussion}
\label{simulations}


To demonstrate the robustness of these graphs, we present simulations of agents in a k-circulant network running the W-MSR protocol. The network size is $n=15$ nodes. Each agent in the graph has state $x[t] \in \mathbb{R}$, and each normal agent follows the W-MSR algorithm to update its own state at each time step. The initial state value for each agent is a random value on the interval $[-50, 50]$.

Several models exist to describe the number and distribution of misbehaving nodes in a network, including the $F$-total, $F$-local, and $f$-fraction local models (see \cite{LeBlanc2012a,LeBlanc_2013_Res_Continuous}). For our simulations we consider an $F$-local model, meaning that any normal agent has at most $F$ misbehaving agents in the set of its in-neighbors. Theorem 2 and Corollary 4 of \cite{LeBlanc_2013_Res} establish that $(2F+1)$-robustness is a sufficient condition for a network using the W-MSR algorithm to achieve consensus among its normal nodes under an $F$-local model of misbehaving agents.

We consider two graphs on 15 nodes, each with different values of $k$. The first graph $\D_1$ has $k=6$, implying $\D_1 = C_{15}(1,2,\ldots,6)$. Figure \ref{fig:topo} shows the communication topology of $\D_1$. By Theorem \ref{circdigraph} and Corollary \ref{cor:rsrob}, $\D_1$ is $3$-robust, implying that consensus is guaranteed under a $F$-local malicious adversary model with $F=1$. Figure \ref{fig:Sim1} shows our simulation with $F=1$ and with nodes 1 and 7 misbehaving. Note that any normal agent $i$ has at most 1 misbehaving agent in $K_i$.The red dotted lines represent the state values of misbehaving nodes, while the solid colored lines represent the state values of normal nodes. The normal nodes are clearly able to achieve consensus in the presence of the misbehaving nodes.

\begin{figure}
\includegraphics[width=\columnwidth]{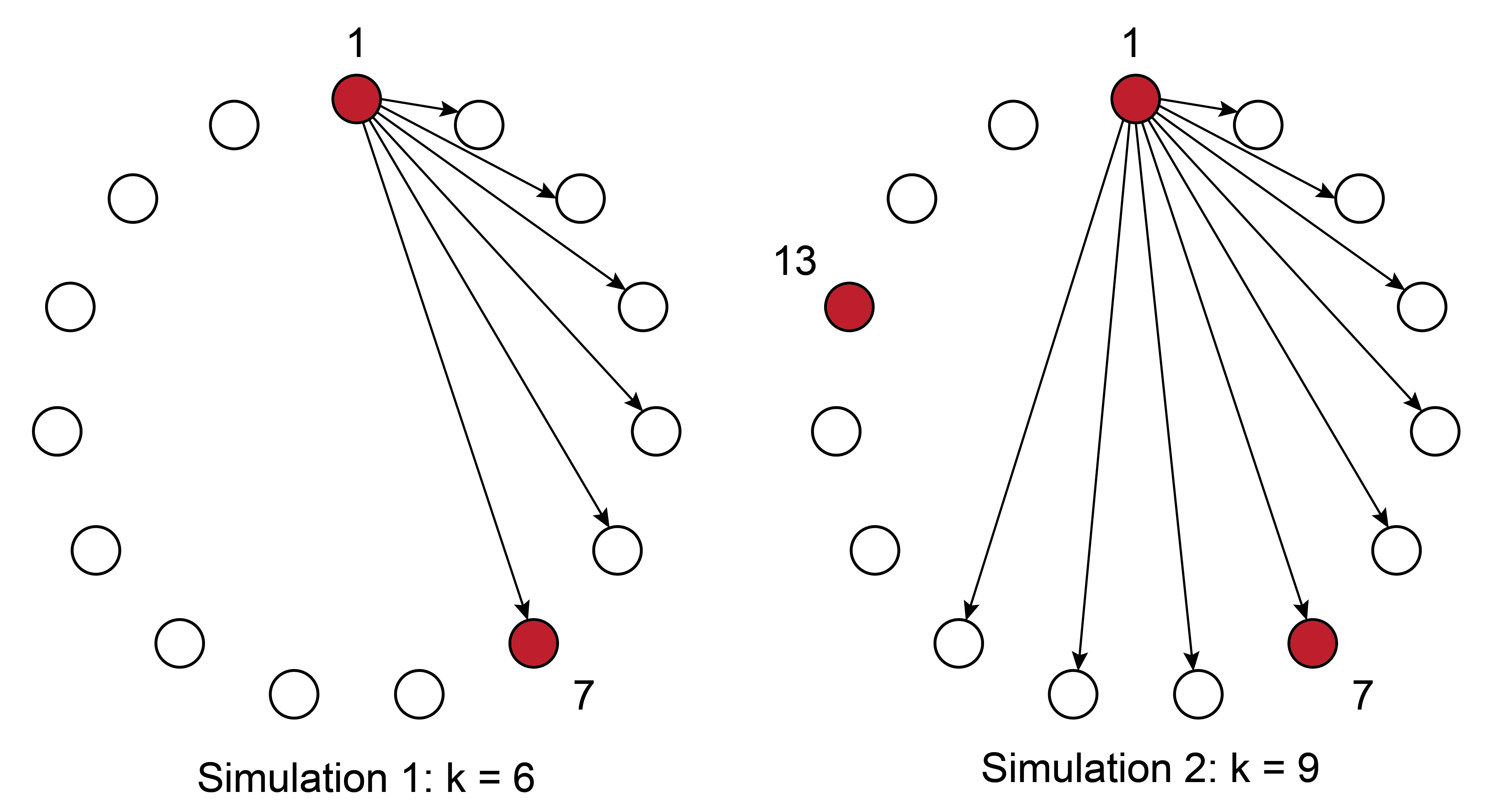}
\caption{The network topology of digraphs $D_1$ and $D_2$. For sake of clarity, only the edges extending from one node are shown; in the actual graph, each node has the same pattern of edges extending from it. The first graph simulated is a $C_{15}\{1,\ldots,6\}$ circulant digraph. The second is a $C_{15}\{1,\ldots,9\}$ circulant digraph. In the first graph, nodes 1 and 7 are misbehaving. In the second, nodes 1, 7, and 13 are misbehaving. The nodes are visualized in a circular manner for ease of understanding rather than representing any kind of physical arrangement.}
\label{fig:topo}
\end{figure}

\begin{figure}
\includegraphics[width=\columnwidth]{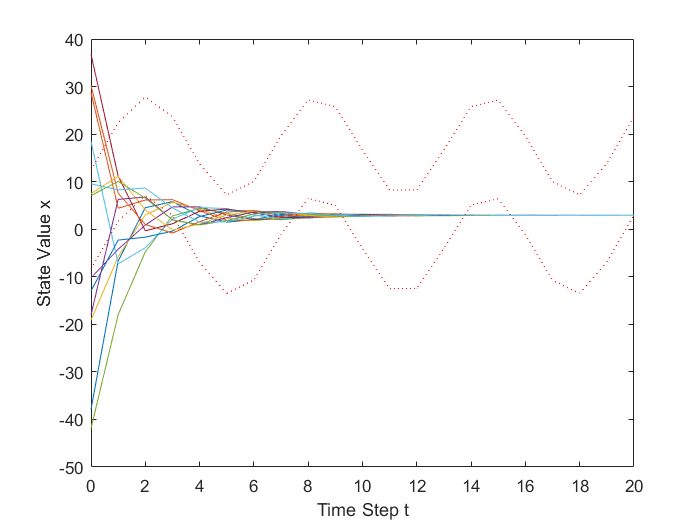}
\caption{Simulation on the graph $\D_1 = C_{15}(1,2,\ldots,6)$. The dotted red lines represent the state trajectories of the misbehaving agents.}
\label{fig:Sim1}
\end{figure}

The second graph $\D_2$ has $k=9$, and therefore is $5$-robust which guarantees consensus under an $F$-local malicious adversary model with $F=2$. Agents 1, 7, and 13 are misbehaving, which implies that any normal agent $i$ has at most 2 misbehaving agents in its in-neighbor set $K_i$. Figure \ref{fig:Sim2} shows the simulation results for the second graph with $F=2$. Again, the normal agents are clearly able to achieve consensus in the presence of the misbehaving nodes. This second simulation also demonstrates the simplicity of changing the robustness of $k$-circulant digraphs. By varying $k$, the network's robustness can be increased or decreased to a desired level.

\begin{figure}
\includegraphics[width=\columnwidth]{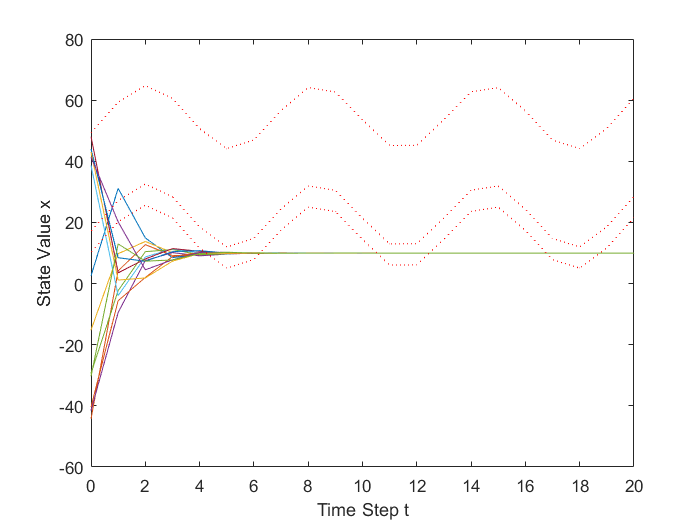}
\caption{Simulation on the graph $D_2 = C_{15}(1,2,\ldots,9)$.}
\label{fig:Sim2}
\end{figure}


\section{Conclusion and Future Work}
\label{conclusion}

This paper demonstrated that a class of scalable graphs called $k$-circulant digraphs with a connection parameter $k$ have $r$-robustness and $(r,s)$-robustness properties that are functions of $k$. Future work includes seeking additional classes of graphs that have predetermined robustness properties, and implementing these classes of graphs in settings that require resilient consensus.

\bibliographystyle{IEEEtran}

\bibliography{Mendeley.bib}

\begin{thebibliography}{10}
\providecommand{\url}[1]{#1}
\csname url@samestyle\endcsname
\providecommand{\newblock}{\relax}
\providecommand{\bibinfo}[2]{#2}
\providecommand{\BIBentrySTDinterwordspacing}{\spaceskip=0pt\relax}
\providecommand{\BIBentryALTinterwordstretchfactor}{4}
\providecommand{\BIBentryALTinterwordspacing}{\spaceskip=\fontdimen2\font plus
\BIBentryALTinterwordstretchfactor\fontdimen3\font minus
  \fontdimen4\font\relax}
\providecommand{\BIBforeignlanguage}[2]{{%
\expandafter\ifx\csname l@#1\endcsname\relax
\typeout{** WARNING: IEEEtran.bst: No hyphenation pattern has been}%
\typeout{** loaded for the language `#1'. Using the pattern for}%
\typeout{** the default language instead.}%
\else
\language=\csname l@#1\endcsname
\fi
#2}}
\providecommand{\BIBdecl}{\relax}
\BIBdecl

\bibitem{Pease1980reaching}
M.~Pease, R.~Shostak, and L.~Lamport, ``{Reaching agreement in the presence of
  faults},'' \emph{Journal of the ACM (JACM)}, vol.~27, no.~2, pp. 228--234,
  1980.

\bibitem{Lamport1982byzantine}
L.~Lamport, R.~Shostak, and M.~Pease, ``{The Byzantine generals problem},''
  \emph{ACM Transactions on Programming Languages and Systems (TOPLAS)},
  vol.~4, no.~3, pp. 382--401, 1982.

\bibitem{Kieckhafer1994reaching}
R.~M. Kieckhafer and M.~H. Azadmanesh, ``{Reaching approximate agreement with
  mixed-mode faults},'' \emph{IEEE Transactions on Parallel and Distributed
  Systems}, vol.~5, no.~1, pp. 53--63, 1994.

\bibitem{Leblanc2011consensus}
H.~J. LeBlanc and X.~D. Koutsoukos, ``{Consensus in networked multi-agent
  systems with adversaries},'' in \emph{Proceedings of the 14th international
  conference on Hybrid systems: computation and control}.\hskip 1em plus 0.5em
  minus 0.4em\relax ACM, 2011, pp. 281--290.

\bibitem{Leblanc2012low}
------, ``{Low Complexity Resilient Consensus in Networked Multi-Agent Systems
  with Adversaries},'' in \emph{Proceedings of the 15th ACM international
  conference on Hybrid Systems: Computation and Control}, 2012, pp. 5--14.

\bibitem{Zhang2012robustness}
{Haotian Zhang} and S.~Sundaram, ``{Robustness of information diffusion
  algorithms to locally bounded adversaries},'' in \emph{2012 American Control
  Conference (ACC)}, Montreal, Canada, 2012, pp. 5855--5861.

\bibitem{LeBlanc2012}
H.~J. LeBlanc, H.~Zhang, S.~Sundaram, and X.~Koutsoukos, ``{Consensus of
  Multi-Agent Networks in the Presence of Adversaries Using Only Local
  Information},'' \emph{Proc. 1st Int. Conf. High Confid. Netw. Syst.}, pp.
  1--10, 2012.

\bibitem{Vaidya2012}
N.~H. Vaidya, L.~Tseng, and G.~Liang, ``{Iterative Approximate Byzantine
  Consensus in Arbitrary Directed Graphs},'' in \emph{Proceedings of the 2012
  ACM symposium on Principles of distributed computing}, 2012, pp. 365--374.

\bibitem{LeBlanc_2013_Res}
H.~J. LeBlanc, H.~Zhang, X.~D. Koutsoukos, and S.~Sundaram, ``{Resilient
  Asymptotic Consensus in Robust Networks},'' \emph{IEEE Journal on Selected
  Areas in Communications}, vol.~31, no.~4, pp. 766--781, 2013.

\bibitem{LeBlanc_2013_Res_Continuous}
H.~J. LeBlanc, H.~Zhang, S.~Sundaram, and X.~Koutsoukos, ``{Resilient
  continuous-time consensus in fractional robust networks},'' in \emph{2013
  American Control Conference}, 6 2013, pp. 1237--1242.

\bibitem{Dibaji2014}
S.~M. Dibaji and H.~Ishii, ``{Resilient consensus of double-integrator
  multi-agent Systems},'' in \emph{American Control Conference (ACC)}, 2014,
  pp. 5139--5144.

\bibitem{Dibaji2015}
------, ``{Resilient consensus of double-integrator multi-agent networks with
  communication delays},'' in \emph{2015 54th IEEE Conference on Decision and
  Control (CDC)}.\hskip 1em plus 0.5em minus 0.4em\relax IEEE, 12 2015, pp.
  4290--4295.

\bibitem{Dibaji2017resilient}
------, ``{Resilient consensus of second-order agent networks: Asynchronous
  update rules with delays},'' \emph{Automatica}, vol.~81, pp. 123--132, 2017.

\bibitem{Wu2016}
Y.~Wu, X.~He, and S.~Liu, ``{Resilient consensus for multi-agent systems with
  quantized communication},'' \emph{Proceedings of the American Control
  Conference}, vol. 2016-July, pp. 5136--5140, 2016.

\bibitem{Dibaji2016resilient}
S.~M. Dibaji, H.~Ishii, and R.~Tempo, ``{Resilient randomized quantized
  consensus},'' in \emph{American Control Conference (ACC), 2016}.\hskip 1em
  plus 0.5em minus 0.4em\relax IEEE, 2016, pp. 5118--5123.

\bibitem{Dibaji2016resilientDelayed}
------, ``{Resilient randomized quantized consensus with delayed
  information},'' in \emph{2016 IEEE 55th Conference on Decision and Control,
  CDC 2016}, 2016, pp. 3505--3510.

\bibitem{Sundaram2015}
S.~Sundaram and B.~Gharesifard, ``{Consensus-based distributed optimization
  with malicious nodes},'' in \emph{Communication, Control, and Computing
  (Allerton), 2015 53rd Annual Allerton Conference on}, 2015, pp. 244--249.

\bibitem{Sundaram2016}
S.~Sundaram, ``{Secure Local Filtering Algorithms for Distributed
  Optimization},'' in \emph{Decision and Control (CDC), 2016 IEEE 55th
  Conference on}, Las Vegas, USA, 2016, pp. 1871--1876.

\bibitem{LeBlanc2017}
H.~J. LeBlanc and X.~Koutsoukos, ``{Resilient First-Order Consensus and Weakly
  Stable, Higher Order Synchronization of Continuous-Time Networked Multi-Agent
  Systems},'' \emph{IEEE Transactions on Control of Network Systems}, vol.
  5870, no.~c, pp. 1--1, 2017.

\bibitem{LeBlanc2012a}
------, ``{Resilient asymptotic consensus in asynchronous robust networks},''
  in \emph{2012 50th Annual Allerton Conference on Communication, Control, and
  Computing, Allerton 2012}, 2012, pp. 1742--1749.

\bibitem{Salda2017}
D.~Saldana, A.~Prorok, S.~Sundaram, M.~F.~M. Campos, and V.~Kumar, ``{Resilient
  Consensus for Time-Varying Networks of Dynamic Agents},'' in \emph{American
  Control Conference (ACC), 2017}, 2017, pp. 252--258.

\bibitem{Leblanc2013algorithms}
H.~J. LeBlanc and X.~D. Koutsoukos, ``{Algorithms for determining network
  robustness},'' in \emph{Proceedings of the 2nd ACM international conference
  on High confidence networked systems}.\hskip 1em plus 0.5em minus 0.4em\relax
  ACM, 2013, pp. 57--64.

\bibitem{Zhang2015a}
H.~Zhang, E.~Fata, and S.~Sundaram, ``{A Notion of Robustness in Complex
  Networks},'' \emph{IEEE Transactions on Control of Network Systems}, vol.~2,
  no.~3, pp. 310--320, 9 2015.

\bibitem{Zhang2012a}
H.~Zhang and S.~Sundaram, ``{Robustness of Complex Networks with Implications
  for Consensus and Contagion},'' in \emph{IEEE 51st IEEE Conference on
  Decision and Control (CDC)}, Maui, Hawaii, USA, 2012, pp. 3426--3432.

\bibitem{Zhao2017robust}
J.~Zhao, O.~Yagan, and V.~Gligor, ``{On Connectivity and Robustness in Random
  Intersection Graphs},'' \emph{IEEE Transactions on Automatic Control},
  vol.~62, no.~5, pp. 2121--2136, 2017.

\bibitem{Shahrivar2015a}
E.~M. Shahrivar, M.~Pirani, and S.~Sundaram, ``{Robustness and Algebraic
  Connectivity of Random Interdependent Networks},'' \emph{IFAC-PapersOnLine},
  vol.~48, no.~22, pp. 252--257, 2015.

\bibitem{Shahrivar2017}
E.~Moradi~Shahrivar, M.~Pirani, and S.~Sundaram, ``{Spectral and structural
  properties of random interdependent networks},'' \emph{Automatica}, vol.~83,
  pp. 234--242, 2017.

\bibitem{Guerrero2016formations}
L.~Guerrero-Bonilla, A.~Prorok, and V.~Kumar, ``{Formations for Resilient Robot
  Teams},'' in \emph{IEEE Robotics and Automation Letters}, vol.~2,
  no.~2.\hskip 1em plus 0.5em minus 0.4em\relax IEEE, 2017, pp. 841--848.

\bibitem{Saldana2016triangular}
D.~Saldana, A.~Prorok, M.~F.~M. Campos, and V.~Kumar, ``{Triangular Networks
  for Resilient Formations},'' in \emph{13th International Symposium on
  Distributed Autonomous Robotic Systems (DARS)}, 2016, pp. 1--13.

\bibitem{Boesch1984}
F.~Boesch and R.~Tindell, ``{Circulants and their connectivities},''
  \emph{Journal of Graph Theory}, vol.~8, no.~4, pp. 487--499, 1984.

\bibitem{Elspas1970}
B.~Elspas and J.~Turner, ``{Graphs with circulant adjacency matrices},''
  \emph{Journal of Combinatorial Theory}, vol.~9, no.~3, pp. 297--307, 1970.

\bibitem{Zhang2012c}
H.~Zhang and S.~Sundaram, ``{A simple median-based resilient consensus
  algorithm},'' \emph{2012 50th Annual Allerton Conference on Communication,
  Control, and Computing, Allerton 2012}, pp. 1734--1741, 2012.

\bibitem{LeBlanc_2012_thesis}
H.~J. LeBlanc, ``{Resilient Cooperative Control of Networked Multi-Agent
  Systems},'' Ph.D. dissertation, Vanderbilt University, 2012.

\bibitem{Rahman2017}
M.~S. Rahman, \emph{{Basic Graph Theory}}, ser. Undergraduate Topics in
  Computer Science.\hskip 1em plus 0.5em minus 0.4em\relax Cham: Springer
  International Publishing, 2017.

\bibitem{Tindell1996}
R.~Tindell, ``{Connectivity of Cayley Digraphs},'' in \emph{Combinatorial
  network theory}, D.-Z. Du and D.~F. Hsu, Eds.\hskip 1em plus 0.5em minus
  0.4em\relax Boston, Massachusetts: Springer US, 1996, pp. 41--64.

\bibitem{Hamidoune1984}
Y.~O. Hamidoune, ``{On the Connectivity of Cayley Digraphs},'' \emph{European
  Journal of Combinatorics}, vol.~5, no.~4, pp. 309--312, 12 1984.

\end{thebibliography}
\end{document}